\DeclareRobustCommand\widecheck[1]{{\mathpalette\@widecheck{#1}}}
\def\@widecheck#1#2{%
    \setbox\z@\hbox{\m@th$#1#2$}%
    \setbox\tw@\hbox{\m@th$#1%
       \widehat{%
          \vrule\@width\z@\@height\ht\z@
          \vrule\@height\z@\@width\wd\z@}$}%
    \dp\tw@-\ht\z@
    \@tempdima\ht\z@ \advance\@tempdima2\ht\tw@ \divide\@tempdima\thr@@
    \setbox\tw@\hbox{%
       \raise\@tempdima\hbox{\scalebox{1}[-1]{\lower\@tempdima\box
\tw@}}}%
    {\ooalign{\box\tw@ \cr \box\z@}}}
\newcounter{definition}
{\medskip}
\newenvironment{definition*}{\refstepcounter{definition}\par\medskip
\noindent 
\textbf{Definition \thedefinition.} \em \rmfamily}
{\medskip}
\newcounter{theorem}
\newenvironment{theorem}[1]{\refstepcounter{theorem}\par\medskip
\noindent 
\textbf{Theorem \thetheorem~(#1)} \em \rmfamily}
{\medskip}
\newcounter{assumption}
\newenvironment{assumption}{\refstepcounter{assumption}\par\medskip
\noindent 
\textbf{Assumption \theassumption.} \em \rmfamily}
{\medskip}
\newcounter{proposition}
\newenvironment{proposition}{\refstepcounter{proposition}\par\medskip
\noindent 
\textbf{Proposition \theproposition.} \em \rmfamily}
{\medskip}
\newenvironment{proposition*}[1]{\refstepcounter{proposition}\par\medskip
\noindent 
\textbf{Proposition \theproposition~(#1)} \em \rmfamily}
{\medskip}
\newcounter{lemma}
{\medskip}
\newenvironment{lemma*}[1]{\refstepcounter{lemma}\par\medskip
\noindent 
\textbf{Lemma \thelemma~(#1)} \em \rmfamily}
{\medskip}
\newcounter{problem}
{\medskip}
\newenvironment{problem*}[1]{\refstepcounter{problem}\par\medskip
\noindent 
\textbf{Problem \theproblem~(#1)} \em \rmfamily}
{\medskip}
\newcounter{remark}
\newenvironment{remark}{\refstepcounter{remark}\par\medskip
\noindent 
\textit{Remark \theremark.} \em \rmfamily}
{\medskip}
\DeclareMathOperator*{\argmax}{arg\,max} 
\newcommand{\be}{\begin{equation}}
\newcommand{\ee}{\end{equation}}
\newcommand{\nn}{\nonumber}
\newcommand{\br}{\mathrm{br}}
\newcommand{\E}{\mathrm{E}}
\newcommand{\uG}{\overline{G}}
\newcommand{\uE}{\overline{E}}
\newcommand{\cq}{\widecheck{q}}
\newcommand{\cu}{\widecheck{u}}
\newcommand{\hq}{\widehat{q}}
\newcommand{\aalpha}{\overline{\alpha}}
\newcommand{\QRE}{\mathrm{QRE}}
\newcommand{\pq}{\pmb{q}}
\newcommand{\ppi}{\pmb{\pi}}
\newcommand{\phq}{\pmb{\hq}}
\newcommand{\pcq}{\pmb{\cq}}
\definecolor{ferrarired}{rgb}{1.0, 0.11, 0.0}
\title{\LARGE \bf
Generalized Individual Q-learning for Polymatrix Games with \\Partial Observations
}
\author{Ahmed Said Donmez and Muhammed O. Sayin
\thanks{*This work was supported by The Scientific and Technological Research Council of T\"{u}rkiye (TUBITAK) BIDEB 2232-B International Fellowship for Early Stage Researchers under Grant Number 121C124.}
\thanks{A. S. Donmez and M. O. Sayin are with the Department of Electrical and Electronics Engineering,
        Bilkent University, Ankara, T\"{u}rkiye.
        Email: {\tt\small said.donmez@bilkent.edu.tr, sayin@ee.bilkent.edu.tr}}%
}
\begin{document}

\maketitle
\thispagestyle{empty}
\pagestyle{empty}

\begin{abstract}
This paper addresses the challenge of limited observations in non-cooperative multi-agent systems where agents can have partial access to other agents' actions. We present the generalized individual Q-learning dynamics that combine belief-based and payoff-based learning for the networked interconnections of more than two self-interested agents. This approach leverages access to opponents' actions whenever possible, demonstrably achieving a faster (guaranteed) convergence to quantal response equilibrium in multi-agent zero-sum and potential polymatrix games. Notably, the dynamics reduce to the well-studied smoothed fictitious play and individual Q-learning under full and no access to opponent actions, respectively. We further quantify the improvement in convergence rate due to observing opponents' actions through numerical simulations.
\end{abstract}

\section{INTRODUCTION}

Limited access to other agents' actions is a significant challenge for multi-agent systems, particularly for networked interconnections \cite{ref:Yuksel13}. In the absence of observations, the literature often focuses on the extreme case of no observation, i.e., payoff-based dynamics \cite{ref:Leslie05,ref:Marden09}. However, payoff-based dynamics ignoring partial observations available can suffer from slow convergence to equilibrium though asymptotic convergence guarantees might still be possible.


This paper addresses the limited access to opponent actions by proposing novel learning dynamics that combine belief-based and payoff-based learning for self-interested agents in non-cooperative environments. To this end, we focus on polymatrix games \cite{ref:Cai16}, where agents have network separable \textit{pairwise} interconnections while they can only observe the actions of the agents in proximity, as illustrated in Fig. \ref{fig:layers} over two separate layers.
We present the \textit{generalized} individual Q-learning dynamics, in which agents leverage opponents' actions that they can observe so that they can accelerate their learning process. The dynamics reduce to the well-studied smoothed fictitious play \cite{ref:Fudenberg09} and individual Q-learning \cite{ref:Leslie05}, in scenarios with access to the actions of every or none of the other agents, respectively.

\begin{figure}[t!]
\centering
\begin{tikzpicture}[multilayer=3d,scale=0.8]

\SetEdgeStyle[LineWidth=0.5]
\Vertex[x=-0.5,y=0.5,NoLabel,size=.1,layer=1]{A1}
\Vertex[x=0,y=1.3,NoLabel,size=.1,layer=1]{B1}
\Vertex[x=0.5,y=-0.5,NoLabel,size=.1,layer=1]{C1}
\Vertex[x=1.5,y=1.3,NoLabel,size=.1,layer=1]{D1}
\Vertex[x=1.5,y=0.5,NoLabel,size=.1,layer=1]{E1}
\Vertex[x=1,y=0.3,NoLabel,size=.1,layer=1]{F1}
\Vertex[x=2.5,y=-.3,NoLabel,size=.1,layer=1]{G1}
\Edge[Direct](A1)(B1)
\Edge[Direct](B1)(A1)
\Edge[Direct](A1)(C1)
\Edge[Direct](C1)(F1)
\Edge[Direct](F1)(C1)
\Edge[Direct](F1)(E1)
\Edge[Direct](E1)(G1)
\Edge[Direct](D1)(E1)
\Vertex[x=-0.5,y=0.5,NoLabel,size=.1,layer=2]{A2}
\Vertex[x=0,y=1.3,NoLabel,size=.1,layer=2]{B2}
\Vertex[x=0.5,y=-0.5,NoLabel,size=.1,layer=2]{C2}
\Vertex[x=1.5,y=1.3,NoLabel,size=.1,layer=2]{D2}
\Vertex[x=1.5,y=0.5,NoLabel,size=.1,layer=2]{E2}
\Vertex[x=1,y=0.3,NoLabel,size=.1,layer=2]{F2}
\Vertex[x=2.5,y=-.3,NoLabel,size=.1,layer=2]{G2}
\Edge[Direct](B2)(F2)
\Edge[Direct](F2)(B2)
\Edge[Direct](D2)(B2)
\Edge[Direct](B2)(E2)
\Edge[Direct](A2)(B2)
\Edge[Direct](B2)(A2)
\Edge[Direct](A2)(C2)
\Edge[Direct](A2)(D2)
\Edge[Direct](A2)(F2)
\Edge[Direct](C2)(F2)
\Edge[Direct](F2)(C2)
\Edge[Direct](C2)(G2)
\Edge[Direct](F2)(E2)
\Edge[Direct](E2)(G2)
\Edge[Direct](F2)(G2)
\Edge[Direct](D2)(E2)
\Edge[style=dashed](A1)(A2)
\Edge[style=dashed](B1)(B2)
\Edge[style=dashed](C1)(C2)
\Edge[style=dashed](D1)(D2)
\Edge[style=dashed](E1)(E2)
\Edge[style=dashed](F1)(F2)
\Edge[style=dashed](G1)(G2)
\Text[x=0,y=-1.5,layer=1]{Layer 1}
\Plane[x=-1,y=-1,width=4,height=3,layer = 1,NoBorder]{L1}
\Text[x=0,y=-1.5,layer=2]{Layer 2}
\Plane[x=-1,y=-1,width=4,height=3,color=green,layer=2,NoBorder]{L2}
\end{tikzpicture}
\caption{An illustration of observability and interconnectedness over two layers. Nodes connected by dashed lines across layers represent an agent. In Layer 1, observability is depicted, where directed edges indicate that agents can observe the actions of others, forming the observability graph. In Layer 2, agents interact and can affect each other's payoffs, as represented by the directed edges, forming the interaction graph.}
\label{fig:layers}
\end{figure}
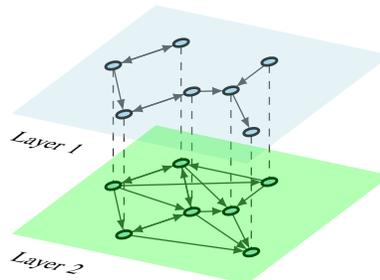

This paper is related to the broad literature of \textit{learning in games} studying whether simplistic learning or adaptation rules of self-interested agents can reach equilibrium, e.g., see the overview \cite{ref:Fudenberg09}. Such behavioral rules are generally based on the simplifying assumption that the opponents play according to some stationary strategies and the agents adapt their act either based on beliefs, as in the smoothed fictitious play \cite{ref:Fudenberg09}, or payoffs, as in the individual Q-learning \cite{ref:Leslie05}. The smoothed fictitious play is known to reach equilibrium in two-agent zero-sum and multi-agent potential games \cite{ref:Fudenberg09}. On the other hand, the individual Q-learning dynamics are known to reach equilibrium in two-agent zero-sum and \textit{two-agent} potential games \cite{ref:Leslie05}. Here, we show that they indeed also reach equilibrium in \textit{multi-agent} zero-sum and \textit{multi-agent} potential polymatrix games. 

Notably, the experience-weighted attraction (EWA) learning unifies the belief-based and payoff-based learning for normal-form games with some parameter controlling the weight of experiences (similar to the beliefs formed) and attractions (based on the payoffs received) in the actions taken \cite{ref:Camerer99}. However, agents still need to have access to opponent actions unless the attractions have the full weight, i.e., it is fully payoff-based. On the other hand, partial access to opponent actions can also lead to heterogeneous dynamics, as studied in \cite{ref:Arslantas23}. They only study two-agent zero-sum games where belief-based and payoff-based dynamics can play against each other.

In this paper, we demonstrate that the generalized individual Q-learning dynamics converge to quantal response equilibrium in multi-agent zero-sum and potential polymatrix games almost surely.  Furthermore, we quantify the significant improvement in convergence rate achieved through access to opponents' actions using numerical simulations.

The rest of the paper is organized as follows. We describe polymatrix games in Section \ref{sec:pre}. We present the new dynamics in Section \ref{sec:algo}. We provide analytical and numerical results, resp., in Sections \ref{sec:main} and \ref{sec:examples}. We conclude the paper with some remarks in Section \ref{sec:conclusion}. Two appendices include the proofs of propositions.

\section{Preliminary: Polymatrix Games}\label{sec:pre}

Consider $n$ agents having pairwise interactions with each other. We can represent their interactions via a graph (depicted in \ref{fig:layers}, Layer 2) $G=(I,E)$, where $I$ is the set of vertices and $E\subset I\times I$ is the set of directed edges, such that vertices and edges, resp., refer to agents and the pairwise interactions, as illustrated in Fig. \ref{fig:layers}. Then, we can model their interactions via a polymatrix game characterized by the tuple $\mathcal{P}=\langle G,(A^i,u^i)_{i\in I}^n\rangle$, where $A^i$ and $u^i:\prod_{j\in I}A^j\rightarrow \mathbb{R}$, resp., denote agent $i$'s \textit{finite} action set and payoff function. Furthermore, the payoff function $u^i(\cdot)$ can be written as
\be\label{eq:polypayoff}
u^i((a^t)_{t\in I}) = \sum_{j:(i,j)\in E} u^{ij}(a^i,a^j)\quad\forall (a^t\in A^t)_{t\in I},
\ee
for some sub-payoff function $u^{ij}:A^i\times A^j\rightarrow\mathbb{R}$ depending on the actions of the pair $(i,j)$ only. Indeed, the separability of the payoff function in \eqref{eq:polypayoff} is the defining characteristic of the polymatrix games \cite{ref:Howson72}. Without loss of generality, we let $u^{ij}(a^i,a^j)=0$ for all $(a^i,a^j)$ if $(i,j)\notin E$. Furthermore, we let each agent $i$ randomize their actions based on some mixed strategy $\pi^i\in \Delta^i$ independently, where $\Delta^i$ denotes the probability simplex over $A^i$. 

For example, we say that $\mathcal{P}$ is a \textit{zero-sum polymatrix game} if $\sum_{i\in I}u^i(a) = 0$ for all $a\in A:= \prod_{i\in I}A^i$. Correspondingly, we say that $\mathcal{P}$ is a \textit{potential polymatrix game} if there exists some potential function $\phi:A\rightarrow \mathbb{R}$ satisfying
\be \label{eq:potentialFunc}
    \phi(\tilde{a}^i,a^{-i}) - \phi(a^i,a^{-i}) = u^i(\tilde{a}^i,a^{-i}) - u^i(a^i,a^{-i}), 
\ee
for all $a^i, \tilde{a}^i \in A^i$, $a^{-i} \in A^{-i} \coloneqq \Pi_{j\neq i}A^j$, and $i \in I$.

We define agent $i$'s smoothed best response $\br^i:\mathbb{R}^{|A^i|}\rightarrow\Delta^i$ by
\be\label{eq:smoothed}
\br^i(q) := \argmax_{\mu \in \Delta^i}\left\{\mu^Tq + \tau H(\mu)\right\}\quad\forall q\in \mathbb{R}^{|A^i|},
\ee 
where $\tau>0$ is the temperature parameter controlling the exploration level and $H(\mu) = -\sum_{a}\mu(a)\log\mu(a)$ is the entropy regularization. There exists a unique smoothed best response due to the strict concavity of the regularization. Indeed, the smoothed best response $\br^i(q)$ can also be written as \cite{ref:Hofbauer05}
\be\label{eq:br}
\br^i(q)(a) = \frac{\exp(q(a)/\tau)}{\sum_{\tilde{a}\in A^i}\exp(q(\tilde{a})/\tau)}\quad\forall a\in A^i.
\ee
Note also that the smoothed best response has an inherent exploration property since every action have positive probability, i.e., $\br^i(q)(a)>0$ for all $a\in A^i$. 

Based on the smoothed best response, we focus on quantal response equilibrium (QRE) \cite{ref:McKelvey95} as a solution concept for $\mathcal{P}$. 
We say that a strategy profile $\pi = (\pi^i\in \Delta^i)_{i\in I}$ for $\mathcal{P}$ is QRE provided that 
\be
\pi^i = \br^i\left(u^i(\cdot,\pi^{-i})\right)\quad \forall i\in I,
\ee
where $\pi^{-i}=(\pi^j)_{j\neq i}$.\footnote{Given any function $f:A\rightarrow\mathbb{R}$, we let 
$f(\mu) = \E_{a\sim \mu}[f(a)]$ for any probability distribution $\mu$ over $A$, with a slight abuse of notation for simpler expressions.} 
Correspondingly, the QRE-gap of any strategy profile $\pi$ is defined by
\be \label{eq:qreGap}
\QRE(\pi) := \sum_{i\in I} \left|u^i(\pi^i,\pi^{-i}) -  u^i\left(\br^i(u^i(\cdot,\pi^{-i})),\pi^{-i}\right)\right|.
\ee

\section{Generalized Individual-Q Learning}\label{sec:algo}

We first describe the individual Q-learning dynamics \cite{ref:Leslie05} for the repeated play of \textit{two-agent} normal-form games where agents do not have access to opponent actions. Each agent $i$ assumes that the opponent $j\neq i$ plays according to some stationary mixed strategy $\pi^j\in \Delta^j$ in the repeated play of the underlying game. Then, the value of action $a^i\in A^i$, called the \textit{Q-function} and denoted by $\underline{q}^i:A^i\rightarrow\mathbb{R}$, is given by
\be\label{eq:q}
\underline{q}^i(a^i) = u^i(a^i,\pi^j)\quad\forall a^i\in A^i.
\ee
Since the agents do not observe the opponent actions, they estimate the Q-function based on the payoff received (different from the belief-based dynamics such as fictitious play). Let $q_k^i:A^i\rightarrow\mathbb{R}$ be agent $i$'s estimate at the $k$th repetition. Then, agent $i$ can update the estimate $q_k^i$ according to
\be\label{eq:qupdate}
q_{k+1}^i(a^i) = q_{k}^i(a^i) + \aalpha_k^i(a^i)\cdot\left(u_k^i - q_k^i(a^i)\right),
\ee
where $u_k^i = u^i(a_k^i,a_k^j)$ is the payoff received when the action profile $(a_k^i,a_k^j)$ is played and $\aalpha_k^i(a^i)\in [0,1]$ is some step size. 

Observe that agents only receive the payoff for the current action profile $(a_k^i,a_k^j)$. Therefore, we have $\aalpha_k^i(a^i) = 0$ for all $a^i\neq a_k^i$. Correspondingly, the estimates for the actions taken frequently get updated frequently. However, this leads to asynchronous update of the Q-functions for different actions. The individual Q-learning dynamics address this issue via the step sizes normalized by the probability of the current action gets played, i.e., $\br^i(q_k^i)(a^i)$, so that the Q-functions of every action get updated synchronously in the expectation \cite{ref:Leslie05}. Particularly, the step size $\aalpha_k^i$ is given by
\be\label{eq:stepsize}
\aalpha_k^i(a^i) = \mathbb{I}_{\{a^i=a_k^i\}} \min\left\{1,\frac{\alpha_k}{\br^i(q_k^i)(a^i)}\right\}\quad\forall a^i\in A^i,
\ee
where $\alpha_k\in(0,1)$ is some reference step size decaying to zero and $\mathbb{I}_{\{\cdot\}}$ is the indicator function. We further threshold the ratio $\alpha_k/\br^i(q_k^i)(a^i)$ from above by $1$. The threshold ensures that the step size $\aalpha_k^i(a^i)\in[0,1]$ for all $a^i$ so that the update \eqref{eq:qupdate} is the convex combination of the current estimate and the payoff received, and therefore, the estimates remain bounded as the payoffs are bounded if the initialization is from zero. 

Due to the threshold in \eqref{eq:stepsize}, the estimates for every action do not necessarily get updated synchronously in the expectation. However, the bounded estimates ensure that the denominator $\br^i(q_k^i)(a^i)$ is uniformly bounded from below by some $\epsilon>0$. Therefore, the decaying $\{a_k\}_{k\geq 0}$ ensures that eventually the threshold does not have any impact on the step size $\aalpha_k^i$, i.e., there exists some $K$ (not depending on the trajectory) such that $\alpha_k/\br^i(q_k^i)(a^i) <1$ for all $k>K$.

Next, we generalize the individual Q-learning to the repeated play of multi-agent polymatrix games where agents have limited access to opponent actions. To this end, we introduce the observability graph (depicted in Figure \ref{fig:layers}, Layer 1) $\uG=(I,\uE)$ where $\uE \subset I\times I$ is the set of directed edges such that $(i,j)\in \uE$ if agent $i$ can observe agent $j$'s actions, as illustrated in Fig. \ref{fig:layers}.

When agent $i$ has access to agent $j$'s action, agent $i$ can form a belief $\pi_k^j\in \Delta^j$ about agent $j$'s strategy (assumed to be stationary) based on the action history. For example, agent $i$ can take the weighted empirical average of the past actions, as in the widely studied fictitious play dynamics. Hence, we let agent $i$ use the reference step size $\alpha_k\in(0,1)$ in the update of the belief $\pi_k^j$ as
\be \label{eq:beliefUpdate}
\pi_{k+1}^j = \pi_k^j + \alpha_k\cdot(a_k^j - \pi_k^j)
\ee
for all $j:(i,j)\in \uE$, where we view the action $a_k^j$ as a degenerate mixed strategy giving probability one to the associated action.

Similar to \eqref{eq:q}, agent $i$'s Q-function is now given by
\begin{align}
\underline{q}^i(a^i) = u^i(a^i,\pi^{-i}) = \sum_{j\neq i} u^{ij}(a^i,\pi^j),
\end{align} 
where the second equality follows from \eqref{eq:polypayoff}. Then, based on the observability graph $\uG$, we can write the Q-function as
\begin{align}
\underline{q}^i(a^i)= &\,\sum_{j:(i,j)\in \uE} u^{ij}(a^i,\pi^j) + \sum_{j:(i,j)\notin \uE} u^{ij}(a^i,\pi^j).\label{eq:qpoly}
\end{align}
Agent $i$ can estimate the first summation on the right-hand side as 
\be
\cq_k^i(a^i) \coloneqq \sum_{j:(i,j)\in \uE} u^{ij}(a^i,\pi^j_k)
\ee
based on the beliefs $\pi_k^j$ formed for all $j$ such that $(i,j)\in \uE$. Indeed, \eqref{eq:beliefUpdate} yields that $\cq_k^i(\cdot)$ evolves according to
\be\label{eq:cq}
\cq_{k+1}^i(a^i) = \cq_{k}^i(a^i) + \alpha_k\cdot(\cu_k^i(a^i) - \cq_k^i(a^i))\quad\forall a^i\in A^i,
\ee
where 
\be
\cu_k^i(a^i) \coloneqq \sum_{j:(i,j)\in \uE} u^{ij}(a^i,a_k^j)\quad\forall a^i\in A^i.
\ee

On the other hand, agent $i$ does not form a belief about agents whose actions cannot be observed. Instead, agent $i$ can estimate the second summation based on the payoff $u^i_k = u^i(a_k^i,a_k^{-i})$ received as in the individual Q-learning dynamics \eqref{eq:qupdate}. However, the payoff received depends also on the pairwise interactions with the agents whose actions can be observed. Agent $i$ can still isolate the payoffs coming from the pairwise interactions with the agents whose actions cannot be observed by
\be\label{eq:hatu}
\widehat{u}^i_k := u^i_k - \cu_k^i(a_k^i)\in\mathbb{R}.
\ee
Then, the estimation for the second term on the right-hand side of \eqref{eq:qpoly}, denoted by $\hq_k^i:A^i\rightarrow\mathbb{R}$, can be updated according to
\be\label{eq:secondqupdate}
\hq^i_{k+1}(a^i) = \hq^i_k(a^i) + \aalpha_k^i(a^i) \cdot (\widehat{u}^i_k - \hq^i_k(a^i))\quad\forall\ a^i \in A^i,
\ee
where the step size $\aalpha_k^i(a^i)\in[0,1]$ is as described in \eqref{eq:stepsize}.

\begin{algorithm}[tb]
    \caption{Generalized Individual Q-learning Dynamics}
    \begin{algorithmic}\label{tab:algo}
    \STATE {\bfseries initialize:} the estimates $\cq_0^i,\hq^i_0\in\mathbb{R}^{A^i}$ arbitrarily
    \FOR{each stage $k=0,1,\ldots$}
    \STATE set $q_k^i = \cq_k^i + \hq^i_k$
    \STATE play $a^i_k \sim \br^i(q_k^i)$
    \STATE receive the payoff $u^i_k = u^i(a^i_k,a^{-i}_k)$ 
    \STATE observe the opponent actions $(a^j_k)_{j: (i,j) \in \uE}$ partially
    \STATE set $\cu_k^i(a^i) = \sum_{j: (i,j) \in \uE} u^{ij}(a^i,a^j_k)$ for all $a^i\in A^i$
    \STATE set $\widehat{u}^i_k = u^i_k - \cu_k^i(a_k^i)$
    \STATE update the estimates
    \begin{flalign*}
    &\cq^i_{k+1}(a^i) = \cq^i_k(a^i) + \alpha_k \cdot (\cu^i_k(a^i) - \cq^i_k(a^i))\quad\forall\ a^i \in A^i\\
    &\hq^i_{k+1}(a^i) = \hq^i_k(a^i) + \aalpha_k^i(a^i) \cdot (\widehat{u}^i_k - \hq^i_k(a^i))\quad\forall\ a^i \in A^i
    \end{flalign*}
    with the step sizes $\alpha_k\in(0,1)$ and 
    \[
    \aalpha_k^i(a^i) = \mathbb{I}_{\{a^i=a_k^i\}} \min\left\{1,\frac{\alpha_k}{\br^i(q_k^i)(a^i)}\right\}\quad\forall a^i\in A^i
    \]
    \ENDFOR
    \end{algorithmic}
\end{algorithm}

Given the estimates $\cq_k^i$ and $\hq_k^i$, the Q-function estimate is given by 
\be\label{eq:qestimate}
q_k^i(a^i) \coloneqq \cq_k^i(a^i) + \hq^i_k(a^i)\quad\forall a^i\in A^i.
\ee
Therefore, agent $i$ can play $a^i_k \sim \br^i(q_k^i)$, as in the individual Q-learning dynamics. 
Algorithm \ref{tab:algo} is a description of the generalized individual Q-learning dynamics for agent $i$. Note that the dynamics presented reduces to the smoothed fictitious play and individual Q-learning if $\uE=E$ or $\uE=\varnothing$, respectively. 

\begin{remark}\label{remark:algo}
We highlight the resemblance between \eqref{eq:cq} and \eqref{eq:secondqupdate}. They differ in the step sizes $\alpha_k\in[0,1]$ vs $\aalpha_k^i\in [0,1]^{A^i}$ and the targets $\cu_k^i\in \mathbb{R}^{A^i}$ vs $\widehat{u}_k^i\in \mathbb{R}$. The distinct target values arise from the need to account for counterfactual payoffs without directly observing opponents' actions. While agent $i$ knows what $\cu_k^i$ is for other possible actions, $\widehat{u}_k^i$ is only known for the current action taken by agent $i$. However, as shown later, both update have identical structures in the expectation conditioned on the play history due to the normalization in \eqref{eq:stepsize} (and for sufficiently large $k$ due to the threshold in \eqref{eq:stepsize}).
\end{remark}

\section{Convergence Results}\label{sec:main}

In this section, we characterize Algorithm \ref{tab:algo}'s convergence properties for any observability graph structure. Agents can have full, partial, or no access to observations. They can even have heterogeneous accesses, as in \cite{ref:Arslantas23} yet beyond two agents. 

We make the following assumption on the step sizes used.


\begin{assumption}\label{assm:stepsize}
The step size $\alpha_k \in (0,1)$ satisfies the following standard conditions: 
\be
\lim_{k\rightarrow\infty} \alpha_k = 0,\quad\sum_{k=0}^{\infty}\alpha_k = \infty,\quad\mbox{and}\quad \sum_{k=0}^{\infty}\alpha_k^2 < \infty
\ee 
 for stochastic approximation.
\end{assumption}

The following theorem shows that Algorithm \ref{tab:algo} reaches equilibrium in the sense that the weighted empirical averages of the actions taken, $\pi_k^i$'s, converges to QRE almost surely in multi-agent zero-sum and potential polymatrix games. Furthermore, the Q-function estimates, $q_k^i$'s, track the Q-functions associated with those empirical averages.

\begin{theorem}{Main Result}\label{thm:mainTheorem}
Consider a polymatrix game characterized by $\mathcal{P} = \langle G,(A^i,u^i)_{i\in I}\rangle$. Let every agent follow Algorithm \ref{tab:algo} and Assumption \ref{assm:stepsize} holds. If $\mathcal{P}$ is a zero-sum or potential polymatrix game, then the weighted empirical averages of the actions $(\pi^i_k)_{i \in I}$ evolving according to \eqref{eq:beliefUpdate} converge to QRE, i.e., 
\be\label{eq:resultQRE}
\QRE(\pi_k) \rightarrow 0
\ee
as $k\rightarrow \infty$ almost surely. Furthermore, the Q-function estimates are asymptotically belief-based in the sense that
\be\label{eq:resultqdiff}
\|q_k^i-u^i(\cdot,\pi_k^{-i})\|_2\rightarrow 0\quad\forall i\in I,
\ee
as $k\rightarrow \infty$ almost surely.\footnote{We can view finite dimensional functions (such as $q_k^i:A^i\rightarrow \mathbb{R}$) as vectors (in $\mathbb{R}^{|A^i|}$ since $A^i$ is a finite set).}     
\end{theorem}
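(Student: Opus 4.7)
The plan is to cast Algorithm \ref{tab:algo} as a stochastic approximation scheme whose limiting mean-field ODE decouples into a Q-value subsystem identical to the entropy-regularized smoothed best-response flow, and then invoke Lyapunov arguments tailored to the zero-sum and potential polymatrix structure.

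First, I would set up the stochastic approximation. Boundedness of the payoffs combined with the convex-combination form of each update keeps $\cq_k^i$ and $\hq_k^i$ (hence $q_k^i$) uniformly bounded along every trajectory, so $\br^i(q_k^i)(a^i)\geq\epsilon$ for some deterministic $\epsilon>0$; since $\alpha_k\to 0$, there is a deterministic $K$ beyond which the $\min\{1,\cdot\}$ threshold in \eqref{eq:stepsize} is inactive. A direct induction on \eqref{eq:cq}-\eqref{eq:beliefUpdate} also yields the identity $\cq_k^i(a^i)=\sum_{j:(i,j)\in\uE}u^{ij}(a^i,\pi_k^j)$ under consistent initialization. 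Conditioning on the history $\mathcal{F}_k$, the normalization $1/\br^i(q_k^i)(a^i)$ cancels the sampling probability of $a_k^i\sim\br^i(q_k^i)$ exactly, giving
\be
\mathbb{E}\bigl[\aalpha_k^i(a^i)(\widehat{u}_k^i-\hq_k^i(a^i))\mid\mathcal{F}_k\bigr] = \alpha_k\Bigl(\sum_{j:(i,j)\notin\uE} u^{ij}(a^i,\br^j(q_k^j)) - \hq_k^i(a^i)\Bigr). \nonumber
\ee
This recasts the joint process $(q_k^i,\pi_k^i)_{i\in I}$ as a Robbins--Monro scheme driven by a single step size $\alpha_k$ (satisfying Assumption \ref{assm:stepsize}) with a square-integrable martingale-difference noise. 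Benaïm's ODE method then yields almost-sure tracking of the mean-field ODE
\be
\dot{q}^i = u^i(\cdot,(\br^j(q^j))_{j\neq i}) - q^i,\quad \dot{\pi}^i = \br^i(q^i)-\pi^i,\quad i\in I. \nonumber
\ee

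Crucially the $q$-subsystem decouples from $\pi$ and coincides with the entropy-regularized smoothed best-response flow. For potential polymatrix games, the regularized potential $\phi(\br^1(q^1),\ldots,\br^n(q^n)) + \tau\sum_{i\in I}H(\br^i(q^i))$ serves as a strict Lyapunov function, monotone along the flow and stationary exactly at the set of fixed points $\{q^*: q^{*i}=u^i(\cdot,(\br^j(q^{*j}))_{j\neq i})\}$. For zero-sum polymatrix games, the pairwise zero-sum structure in \eqref{eq:polypayoff} allows the use of $\sum_{i\in I}V^i(q^i)$ with $V^i(q^i):=\max_{\mu\in\Delta^i}\{\mu^\top q^i+\tau H(\mu)\}$; differentiating along the flow and invoking the network-separable zero-sum cancellation leaves a nonpositive remainder that vanishes only on the same fixed-point set, paralleling the classical two-agent argument. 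In either regime, the $\omega$-limit of $q_k$ lies in the fixed-point set, after which the $\pi$-subsystem yields $\pi_k^i\to\br^i(q^{*i})$, which is by definition a QRE. Continuity of \eqref{eq:qreGap} then delivers $\QRE(\pi_k)\to 0$, and $q^{*i}=u^i(\cdot,(\br^j(q^{*j}))_{j\neq i})$ combined with $\pi_k^{-i}\to(\br^j(q^{*j}))_{j\neq i}$ gives \eqref{eq:resultqdiff}.

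The principal obstacle I expect is engineering a valid strict Lyapunov function in the multi-agent zero-sum polymatrix case. For potential games the game's potential is immediately available, and for two-agent zero-sum games the conjugate-sum construction is classical; for networks of zero-sum pairwise interactions, however, one must argue that the global cancellation $\sum_{i\in I}u^i(a)=0$ propagates through the logit operator so that the sum-of-conjugates is genuinely nondecreasing along the flow, which requires careful bookkeeping with the separable form \eqref{eq:polypayoff}.
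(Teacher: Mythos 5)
Your overall architecture --- rewrite Algorithm \ref{tab:algo} as a Robbins--Monro scheme, verify boundedness and the eventual inactivity of the threshold in \eqref{eq:stepsize}, check that the normalization by $\br^i(q_k^i)(a_k^i)$ makes the conditional expectation of the $\hq$-update match that of the $\cq$-update, and pass to the limiting o.d.e.\ via Bena\"im's method --- is exactly the paper's, and your conditional-expectation identity is correct. The gap is in the step you yourself flag as the principal obstacle: the Lyapunov construction. For zero-sum polymatrix games your candidate $\sum_{i}V^i(q^i)$ with $V^i(q^i)=\max_{\mu\in\Delta^i}\{\mu^\top q^i+\tau H(\mu)\}$ does not work: by the envelope theorem and the $q$-flow, $\tfrac{d}{dt}\sum_i V^i = \sum_i u^i(\br^1(q^1),\dots,\br^n(q^n)) - \sum_i \br^i(q^i)^\top q^i = -\sum_i \br^i(q^i)^\top q^i$, which is not sign-definite (equivalently, $\tfrac{d}{dt}\sum_iV^i=-\sum_iV^i+\tau\sum_iH(\br^i(q^i))$, so the flow is only attracted to the set where $\sum_i\br^i(q^i)^\top q^i=0$, which is much larger than the QRE fixed-point set). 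The paper's $V_z$ augments the conjugates with $-\big(u^i(\pi)+\tau H(\pi^i)\big)$ and, crucially, with the tracking term $\|q^i-u^i(\cdot,\pi^{-i})\|_2$; the point of keeping the \emph{coupled} $(q,\pi)$ system is that the error $e^i:=\pq^i-u^i(\cdot,\ppi^{-i})$ satisfies $\dot e^i=-e^i$ along \eqref{eq:ode} (by the polymatrix separability), while the leftover $-\sum_i\br^i(\pq^i)^\top\pq^i$ is absorbed into $-\sum_i L^i$ via a Bregman-divergence term of the entropy, yielding $\dot V_z\le-\sum_iL^i\le 0$ with equality exactly on $\Lambda$. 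Working purely in $q$-space, as you propose, forfeits both devices.

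Two smaller issues. For potential games your candidate $\phi(\br(q))+\tau\sum_iH(\br^i(q^i))$ does have derivative $\sum_i\dot q^{i\top}D\br^i(q^i)\,\dot q^i\ge 0$, but $D\br^i$ annihilates constants, so the derivative vanishes on the strictly larger set $\{q:\ q^i-u^i(\cdot,\br^{-i}(q^{-i}))\in\mathrm{span}(\mathbf 1)\ \forall i\}$; you would need an additional argument (or the paper's tracking term, which vanishes only when $q^i\equiv u^i(\cdot,\pi^{-i})$ exactly) to identify the chain-transitive sets, plus Sard's lemma to ensure the Lyapunov values of the equilibrium set have empty interior, since the QRE set of a potential game need not be a singleton. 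Finally, the hand-off ``the $\omega$-limit of $q_k$ lies in the fixed-point set, after which $\pi_k^i\to\br^i(q^{*i})$'' presumes convergence of $q_k$ to a single point $q^*$; internally chain transitive sets can be continua, which is precisely why the paper keeps $\pi$ inside the Lyapunov function and inside $\Lambda$ rather than treating the $\pi$-equation as a downstream cascade.
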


\begin{proof}
The proof methodology is based on the stochastic approximation techniques \cite{ref:Benaim99}. The result follows from the stochastic approximation of Algorithm \ref{tab:algo} based on new Lyapunov function formulations. We first focus on approximating discrete-time updates in Algorithm \ref{tab:algo} via their limiting ordinary differential equations (o.d.e.). For example, we can write \eqref{eq:beliefUpdate} as
\be\label{eq:newbelief}
\pi_{k+1}^i = \pi_k^i + \alpha_k (\br^i(q_k^i) - \pi_k^i +\omega_k^i)\quad\forall i \in I,
\ee
where the stochastic approximation error is defined by
\be
\omega_k^i := a_k^i - \br^i(q_k^i).
\ee
Correspondingly, the update \eqref{eq:cq} can be written as
\be\label{eq:newcq}
\cq_{k+1}^i \equiv \cq_k^i + \alpha_k\left(\sum_{j: (i,j) \in \uE} u^{ij}(\cdot,\br^j(q_k^j)) - \cq_k^i + \widecheck{w}_k^i\right),
\ee
where the stochastic approximation error is defined by
\be
\widecheck{w}_k^i(a^i) \coloneqq \sum_{j: (i,j) \in \uE} (u^{ij}(\cdot,a_k^j) - u^{ij}(\cdot,\br^j(q_k^j)))\quad\forall a^i\in A^i.
\ee
Recall the discussion that the threshold in the step size $\aalpha_k^i(a^i)$ is ineffective, i.e., $\aalpha_k^i(a^i)<1$, for sufficiently large $k$. Therefore, for sufficiently large $k$, we can write \eqref{eq:secondqupdate} as
\be\label{eq:newq}
\hq^i_{k+1} \equiv \hq^i_k + \alpha_k\left(\sum_{j: (i,j) \notin \uE} u^{ij}(\cdot,\br^j(q_k^j))  - \hq^i_k + \widehat{w}_k^i\right),
\ee
where the stochastic approximation error is defined by
\begin{flalign}
\widehat{w}_k^i&(a^i) \coloneqq \frac{\mathbb{I}_{\left\{a^i=a_k^i\right\}}}{\br^i(q_k^i)(a_k^i)}\biggl(\widehat{u}^i_k - \hq_k^i(a^i)\biggr)\nn\\
&-\left(\sum_{j: (i,j) \notin \uE} u^{ij}(a^i,\br^j(q_k^j)) -\hq_k^i(a^i)\right)\quad\forall a^i\in A^i.
\end{flalign}

Let $x_k = [\cq_k^i;\hq_k^i; \pi_k^i]_{i \in I} \in \mathbb{R}^d$, where $d\coloneqq 3\prod_i |A^i|$, and $w_k = [\widecheck{w}_k^i;\widehat{w}_k^i ; \omega_k^i]_{i \in I}$. Given the filtration $\mathcal{F}_k \coloneqq \sigma(x_l,w_l; l\leq k)$, the stochastic approximation errors $(w_k)_{k\geq 0}$ form a Martingale difference sequence with $\mathrm{E}[\widecheck{w}^i_k\mid \mathcal{F}_k] \equiv 0$ and $\mathrm{E}[\widehat{w}^i_k\mid \mathcal{F}_k] \equiv 0$, where the expectation is taken with respect to the randomness induced by randomized actions. This yields that the updates \eqref{eq:newcq} and \eqref{eq:newq} have identical structures in the expectation conditioned on the history $\mathcal{F}_k$, as discussed in Remark \ref{remark:algo}.

To have a simpler expression for the evolution of $x_k$, we define the function $F(\cdot)$ by
\be\nn
F\left(\begin{bmatrix}\cq_k^i\\ \hq_k^i \\ \pi_k^i\end{bmatrix}_{i \in I}\right) = \begin{bmatrix}\sum_{j: (i,j) \in \uE} u^{ij}(\cdot,\br^j(\cq_k^j+\hq_k^j)) \\ \sum_{j: (i,j) \notin \uE} u^{ij}(\cdot,\br^j(\cq_k^j+\hq_k^j)) \\ \br^i(\cq_k^j+\hq_k^j)\end{bmatrix}_{i\in I}.
\ee
Since $\br^i(\cdot)$ is Lipschitz continuous \cite{ref:Gao17}, we can show that $F(\cdot)$ is also Lipschitz continuous.
Then, for sufficiently large $k$, we can write \eqref{eq:newbelief}, \eqref{eq:newcq}, and \eqref{eq:newq} as
\begin{flalign}\label{eq:discreteUpdate}
    x_{k+1} = x_k + \alpha_k(F(x_k) - x_k + w_k).
\end{flalign}
We can also show that the iterates $x_k$ remain bounded since $\alpha_k\in(0,1)$ and payoffs are bounded, as the updated values are simply the convex combination of previous $x_k$ and the bounded $F(x_k)$. Since the step sizes satisfy Assumption \ref{assm:stepsize}, we can apply stochastic approximation methods and obtain the limiting o.d.e. of \eqref{eq:discreteUpdate} as
$\dot{\pmb{x}} = F(\pmb{x}) - \pmb{x}$ \cite{ref:Benaim99}.
More explicitly, we have
\begin{subequations}\label{eq:ode}
\begin{flalign} 
&\frac{d\pcq^i}{dt}(\cdot) = \sum_{j:(i,j)\in \uE} u^{ij}(\cdot,\br^j(\pcq^{j}+\phq^j)) - \pcq^i(\cdot)\\
&\frac{d\phq^i}{dt}(\cdot) = \sum_{j:(i,j)\notin \uE} u^{ij}(\cdot,\br^j(\pcq^{j}+\phq^j)) - \phq^(\cdot)\\
&\frac{d\ppi^i}{dt} = \br^i(\pcq^i+\phq^i) - \ppi^i
\end{flalign}
\end{subequations}
for the continuous-time functions $\pcq^i:[0,\infty)\rightarrow\mathbb{R}^{A^i}$, $\phq^i:[0,\infty)\rightarrow\mathbb{R}^{A^i}$, and $\ppi^i:[0,\infty)\rightarrow\Delta^i$ for all $i\in I$.

The discrete-time iterates $x_k$ almost surely converges to a compact connected internally chain transitive set of the o.d.e. \eqref{eq:ode}. To characterize the limit behavior of \eqref{eq:discreteUpdate} through \eqref{eq:ode}, we follow the approach in \cite[Section 6.2]{ref:Benaim99} and call a continuously differentiable function $V:\mathbb{R}^d\rightarrow \mathbb{R}$ by a \textit{Lyapunov function} for some compact invariant set $\Lambda\subset \mathbb{R}^d$ provided that for any solution to \eqref{eq:ode}, we have
\begin{itemize}
\item $\dot{V}(\pmb{x}(t)) < 0$ for all $\pmb{x}(t)\notin \Lambda$
\item $\dot{V}(\pmb{x}(t)) = 0$ for all $\pmb{x}(t)\in \Lambda$.
\end{itemize}
Then, \cite[Proposition 6.4]{ref:Benaim99} says that every internally chain transitive set of \eqref{eq:ode} is contained in $\Lambda$ provided that $V(\Lambda)\coloneqq \{V(x):x\in\Lambda\}\subset \mathbb{R}$ has empty interior.

    \begin{figure*}[t]
    \centering
    \begin{subfigure}{.43\textwidth}
    \centering
    \includegraphics[width=\textwidth]{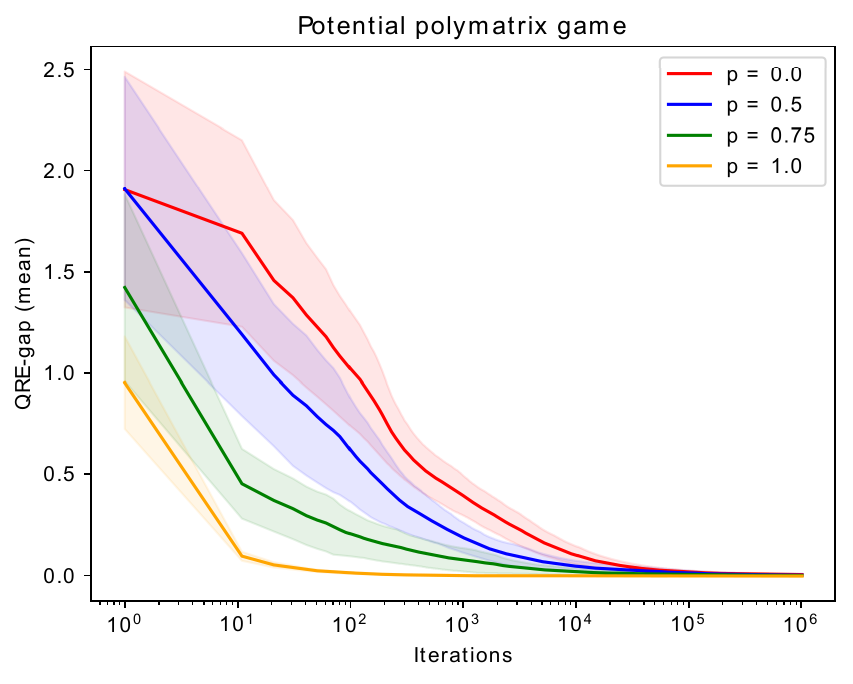}
    \caption{QRE-gap}\label{fig:ii_QRE}
    \end{subfigure}
    \begin{subfigure}{.43\textwidth}
    \centering
    \includegraphics[width=\textwidth]{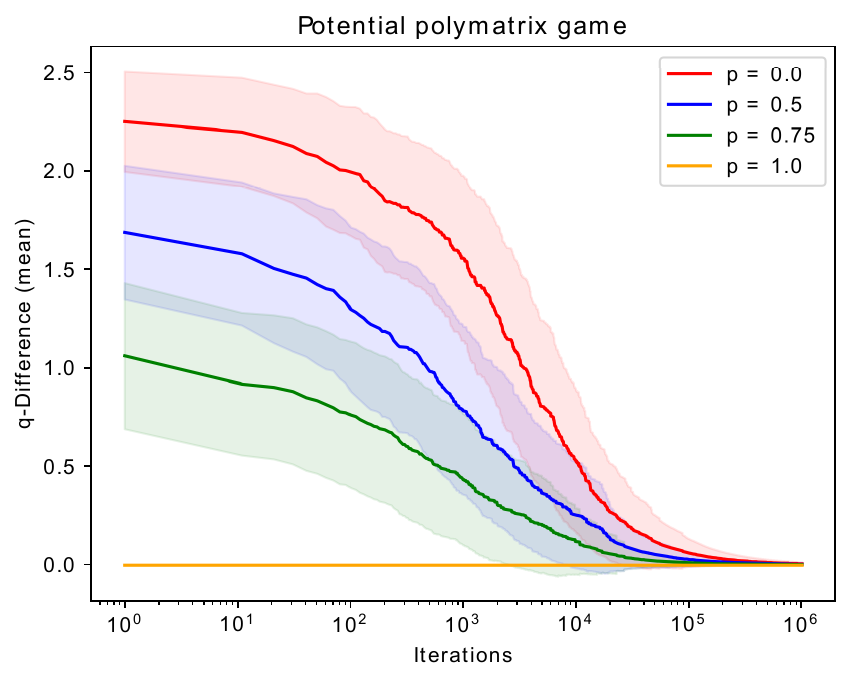}
    \caption{Q-difference}
    \label{fig:ii_Qdiff}
    \end{subfigure}
  \caption{Convergence of the QRE-gap (left), and $q_{\mathrm{diff}}$ (right) for different values of edge-connection probability $p$, in the potential polymatrix games. The solid curves represent the mean over $50$ independent trials and the shaded areas are $\pm 0.5$ standard deviations from the mean values.}\label{fig:potential}
\end{figure*}

Next, we formulate Lyapunov functions for \eqref{eq:ode}. We first focus on zero-sum polymatrix games and introduce the candidate $V_z(\cdot)$ defined by
\be\label{eq:V_z}
V_z(\cq,\hq,\pi) \coloneqq \sum_{i \in I} L^i(\cq^i+\hq^i,\pi),
\ee
where for each $i\in I$, the auxiliary $L^i(\cdot)$ is defined by
\begin{flalign}
    L^i(q^i,\pi) \coloneqq &\;\max_{\mu^i\in\Delta^i} \bigg\{q^i(\mu^i)  + \tau H(\mu^i)\bigg\} - \bigg(u^i(\pi) + \tau H(\pi^i)\bigg) \nn\\
    &+ \|q^i - u^i(\cdot,\pi^{-i})\|_2.\label{eq:Li}
\end{flalign}
 
\begin{remark}
The Lyapunov function $V_z(\cdot)$ reduces to the one presented in \cite{ref:Park24} (and \cite{ref:Hofbauer05}) if agents have full access to observations (and there are only two agents). 
\end{remark}
 
The following proposition shows that the candidate $V_z(\cdot)$ is a Lyapunov function for \eqref{eq:ode}, and \eqref{eq:resultQRE} and \eqref{eq:resultqdiff} hold for zero-sum polymatrix games.

\begin{figure*}[t]
    \centering
    \begin{subfigure}{.43\textwidth}
    \centering
    \includegraphics[width=\linewidth]{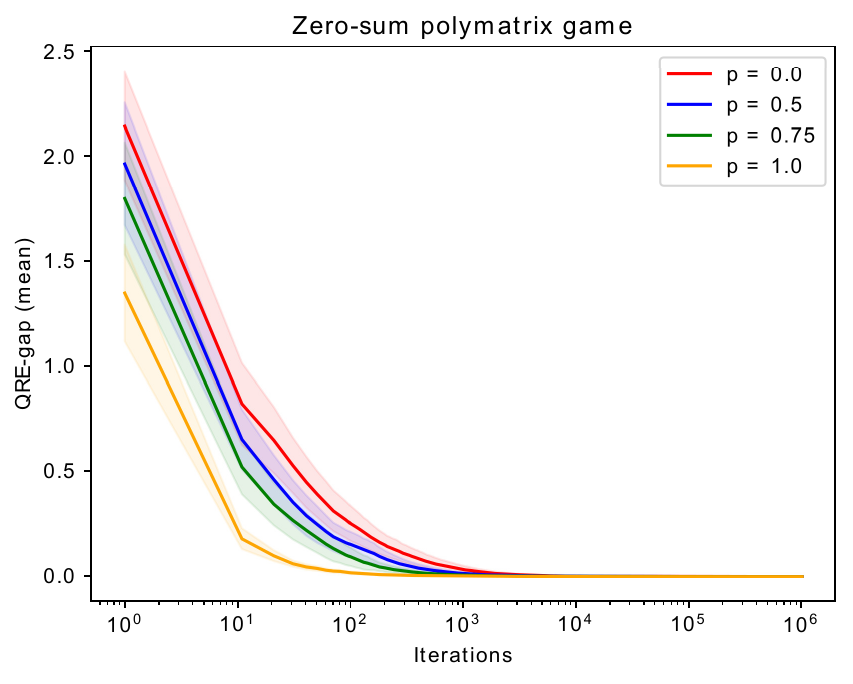}
    \caption{QRE-gap}\label{fig:zs_QRE}
    \end{subfigure}
    \begin{subfigure}{.43\textwidth}
    \centering
    \includegraphics[width=\linewidth]{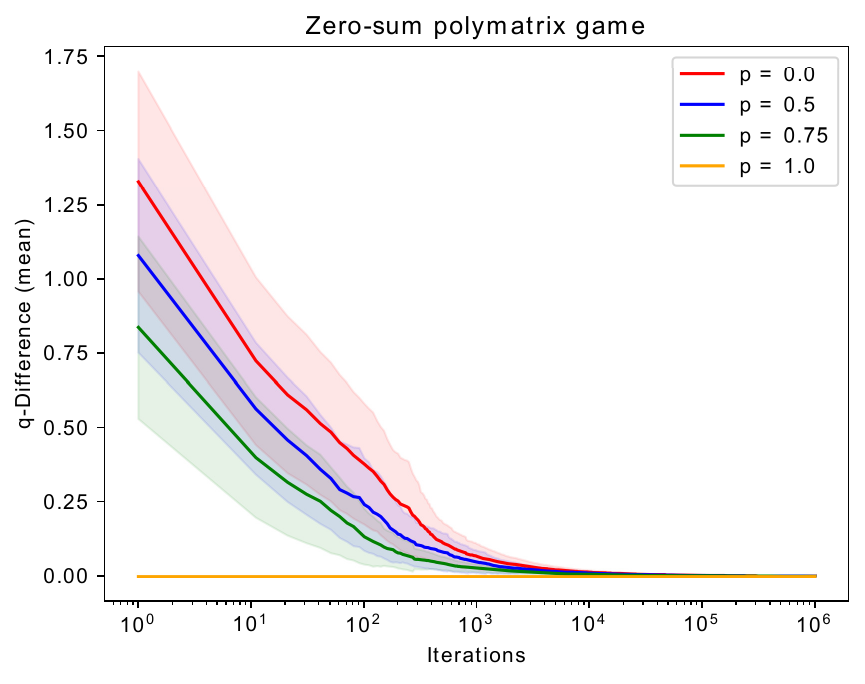}
    \caption{Q-difference}\label{fig:zs_Qdiff}
    \end{subfigure}
     \caption{Convergence of the QRE-gap (left), and $q_{\mathrm{diff}}$ (right) for different values of edge-connection probability $p$, in the zero-sum polymatrix games. The solid curves represent the mean over $50$ independent trials and the shaded areas are $\pm 0.5$ standard deviations from the mean values.}\label{fig:zerosum}
\end{figure*}

\begin{proposition}\label{prop:zerosum}
     Given a zero-sum polymatrix game $\mathcal{P}$ and the o.d.e. \eqref{eq:ode}, the function $V_z(\cdot)$ is a Lyapunov function for
     \be\nn
     \Lambda:= \{(\cq,\hq,\pi): \pi^i = \br^i(\cq^i+\hq^i)\mbox{ and } \cq^i+\hq^i \equiv u^i(\cdot,\pi^{-i})\}
     \ee
     and $V_z(\Lambda) = 0$.
     \end{proposition}

We next focus on potential polymatrix games with the potential function $\phi(\cdot): A \rightarrow \mathbb{R}$ and introduce the candidate $V_p(\cdot)$ defined by 
\begin{flalign}
        V_p(\cq,\hq,\pi) \coloneqq &\;-\phi(\pi) -\sum_{i \in I} \tau H(\pi^i) \nn\\
        &+ 2\sum_{i \in I} \|q^i - u^i(\cdot,\pi^{-i})\|_2.\label{eq:V_p}
    \end{flalign}
    
\begin{remark}
The Lyapunov function $V_p(\cdot)$ reduces to the one in \cite{ref:Hofbauer05} if agents have full access to observations.
\end{remark}

The following proposition shows that the candidate $V_p(\cdot)$ is a Lyapunov function for \eqref{eq:ode} in potential polymatrix games.
   
    \begin{proposition} \label{prop:V_p}
        Given a potential polymatrix game $\mathcal{P}$ and the o.d.e. \eqref{eq:ode}, the function $V_p(\cdot)$ is a Lyapunov function for $\Lambda$, which is defined in Proposition \ref{prop:zerosum} yet $V_p(\Lambda)$ is not necessarily singleton.
    \end{proposition}
    
    Note that $V_p$ is continuously differentiable. Therefore, Sard's Lemma yields that $V_p(\Lambda)$ has empty interior, e.g., see \cite[Corollary 3.28]{ref:Benaim05}. This implies that \eqref{eq:resultQRE} and \eqref{eq:resultqdiff} also hold for potential polymatrix games. This completes the proof of Theorem \ref{thm:mainTheorem}.
\end{proof}

\section{Illustrative Examples}\label{sec:examples}

In this section, we present simulation results. We generate fully-connected polymatrix games with uniformly distributed edge-payoffs $u^{ij}$ for each edge in ${(i,j): i<j}$. For zero-sum games, we set $u^{ji} = -u^{ij}$, and for potential games, $u^{ji} = u^{ij}$. In zero-sum games, payoffs sum to zero, $\sum_{i \in I} u^i(a) = 0$, while in potential games, $\phi(a) = \frac{1}{2}\sum_{i \in I} u^i(a)$.

We focus on $(n=4)$-agent games where each agent has $3$ actions, i.e., $|A^i|=3$ for all $i$. We set the temperature parameter $\tau = 0.25$.
We generate the observability graphs based on the Erd\H{o}s R\'{e}nyi model with probabilities $p \in \{0,0.5,0.75,1\}$ such that $p=0$ and $p=1$ correspond, resp., to the no and full access to observations. We run $50$ independent trials for each $p$ and at each trial, we generate a random observability graph according to this model. We also run our simulations over $10^6$ repetitions of the underlying game. 

To measure the performance of Algorithm \ref{tab:algo}, we use two metrics: QRE-gap, as described in \eqref{eq:qreGap}, and 
\begin{flalign}
    q_{\mathrm{diff}}(q_k,\pi_k) \coloneqq \frac{1}{n} \sum_{i \in I} \|q_k^i(\cdot) - u^i(\cdot,\pi^{-i}_k)\|_2
\end{flalign}
measuring the tracking error between $q_k^i$ and $u^i(\cdot,\pi_k^{-i})$ averaged across agents, similar to \eqref{eq:resultqdiff}. We plot the evolutions of these metrics for potential polymatrix and zero-sum polymatrix games, resp., in Figs. \ref{fig:potential} and \ref{fig:zerosum}. We observe that these metrics decay to zero, as suggested by Theorem \ref{thm:mainTheorem}. 

\begin{figure}[t!]
\centering
\includegraphics[width=2.0in]{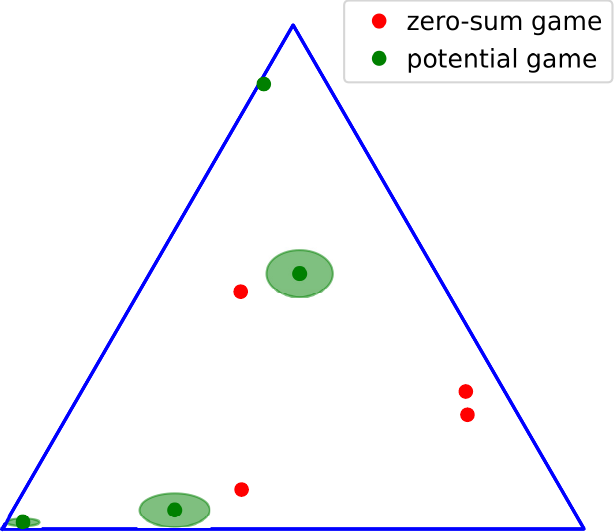}
\caption{An illustration of $\pi^i_k$'s over the probability simplex $\Delta^i\subset\mathbb{R}^3$ for each agent $i$ at the end of $10^6$ iterations and over $50$ independent trials for $p=1$. Green and red dots are, resp., for the potential and zero-sum games. The shaded areas represent the standard deviations from the mean values.}\label{fig:empirical}
\end{figure}
We achieve asymptotic convergence for any edge-connection probability \(p\), with the rate improving as $p$ increases. Convergence is slowest with no observations \( p = 0 \), and fastest with full observations \( p = 1 \). We observe that even though the QRE-gap decays at similar rates for the potential and zero-sum games under full observations, i.e., $p=1$, the QRE-gap decays slower for the potential games than the zero-sum games under partial/no observations, i.e., $p<1$. Indeed, the $q_{\mathrm{diff}}$ metrics also decay to zero slower for the potential games and this can slow down the QRE-gap's decay under partial/no observations. Exploration plays an important role in minimizing the tracking error. The equilibrium for the zero-sum game turns out to be more mixed than the potential game, as illustrated in Fig. \ref{fig:empirical}. 


\section{Conclusion}\label{sec:conclusion}

We introduced the generalized individual Q-learning dynamics to address the challenge of limited observations in non-cooperative multi-agent systems. By combining belief-based and payoff-based learning, this approach demonstrates faster convergence to quantal response equilibrium in networked interconnections of more than two self-interested agents. The proposed dynamics offer a significant improvement over existing methods, establishing theoretical convergence guarantees in multi-agent zero-sum and potential polymatrix games. Through numerical simulations, we validate the effectiveness of our approach and experimentally observe the benefits of observing opponents' actions. 

We can list some of the future research directions as follows: $(i)$ generalizing other payoff-based dynamics to partial observation settings, $(ii)$ addressing limited access issue for learning in stochastic games, and $(iii)$ quantifying the impact of the observations on the learning rate analytically.

\appendices
\section{Proof of Proposition \ref{prop:zerosum}}

We first show the non-negativity of each $L^i \geq 0$, and therefore, the non-negativity of $V_z\geq 0$. To this end, we let $\widehat{\mu} \coloneqq \argmax_{\mu\in \Delta^i} \{u^{i}(\mu,\pi^{-i})+\tau H(\mu)\}$, which is unique due to the entropy regularization. Then, \eqref{eq:Li} yields that we can bound $L^i(\cdot)$ from below by
    \begin{flalign}
        L^i(q^i,\pi) &\stackrel{(a)}{\geq} q^{i}(\widehat{\mu}) - u^{i}(\widehat{\mu},\pi^{-i})+ \|q^i - u^i(\cdot,\pi^{-i})\|_2\nn\\
        & \stackrel{(b)}{\geq} 0, \label{eq:Lbound}
    \end{flalign}
    where $(a)$ is due to the definitions of $L^i$ and $\widehat{\mu}$, and $(b)$ follows from $\|x\|_2 \geq \|x\|_\infty \geq |\widehat{\mu}^Tx|$ for all $x \in \mathbb{R}^{\left|A^i\right|}$ and $\widehat{\mu}\in \Delta^i$. Note that $(a)$ is an equality if only if $\widehat{\mu}=\br^i(q^i)$ and $\widehat{\mu} = \pi^i$. On the other hand, $(b)$  is an equality if only if $q^i - u^{i}(\cdot,\pi^{-i}) \equiv 0$ since $\widehat{\mu}$ is a fully mixed strategy due to the entropy regularization. 
    Therefore, we have $L^i(q^i,\pi) = 0$ if and only if $\pi^i = \br^i(q^i)$, and $q^i = u^i(\cdot, \pi^{-i})$. 
    
   Next, we focus on the time derivative of $V_z$. Since the underlying game is zero sum, we have 
   \be\label{eq:zerosum}
   \sum_{i\in I} u^i(\pi) = 0\quad\forall \pi \in \prod_{i \in I} \Delta^i.
   \ee 
This implies that we can write $V_z(\cdot)$ as
    \begin{flalign}
    V_z(\cq,\hq,\pi) = \sum_{i\in I} \bigg(&\max_{\mu\in\Delta^i}\{q^{i}(\mu)+\tau H(\mu)\} - \tau H(\pi^i) \nn\\
    &+ \|q^i - u^i(\cdot,\pi^{-i})\|_2\bigg).\label{eq:sum}
    \end{flalign}
Due to the smoothness of the entropy regularization, the envelope theorem yields that the time derivative of the first term in the summation \eqref{eq:sum} is given by
\begin{flalign}
\frac{d}{dt}\max_{\mu\in\Delta^i}\{\pq^{i}(\mu)&+\tau H(\mu)\} = \frac{d\pq^i}{dt}(\br^i(\pq^i)) \nn\\
&= \sum_{j\in I} u^{ij}(\br^i(\pq^i),\br^j(\pq^j)) - \pq^i(\br^i(\pq^i)),\label{eq:env}
\end{flalign}
where the second equality follows from \eqref{eq:ode} as $\pq \equiv \pcq+\phq$.
By \eqref{eq:zerosum} and \eqref{eq:env}, we can write the time derivative of $V$ as
    \begin{flalign}
        &\frac{d}{dt}V_z(\pcq,\phq,\ppi) = -\sum_{i \in I} \Big(\pq^i(\br^i(\pq^i)) + \|\pq^i - u^i(\cdot,\ppi^{-i})\|_2 \nn \\
        &\hspace{.6in}+ \tau \nabla H(\ppi^i)(\br^i(\pq^i)-\ppi^i) \Big).
        \end{flalign}
        By \eqref{eq:Li} and \eqref{eq:zerosum}, adding and subtracting $\tau H(\br^i(\pq^i))-\tau H(\ppi^i)$ for each $i$ yields that 
        \begin{flalign}
        \frac{d}{dt}V_z(\pcq,\phq,\ppi) &= -\sum_{i\in I} L^i(\pq^i,\ppi) \nn\\
        &\hspace{-.4in}+\tau\sum_{i\in I} (H(\br^i(\pq^i))-H(\ppi^i) - \nabla H(\ppi^i)(\br^i(\pq^i)-\ppi^i)) \nn\\
        &\leq \sum_{i \in I} -L^i(\phq^i,\ppi),\label{eq:final}
    \end{flalign}
    where the inequality follows from the concavity of the entropy, i.e., $H(y)-H(x) - \nabla H(x)(y-x) \leq 0$ for all $x$, $y$. 
    
    By \eqref{eq:Lbound} and \eqref{eq:final}, we have $\dot{V}_z\leq 0$ with equality if and only if $L^i = 0$ for all $i \in I$ and $H(\br^i(\pq^i))-H(\ppi^i) = \nabla H(\ppi^i)(\br^i(\pq^i)-\ppi^i)$. Therefore, we have $\dot{V}_z(\pcq,\phq,\ppi) = 0$ if and only if $\ppi^i = \br^i(\pq^i)$ and $\pq^i = u^i(\cdot,\ppi^{-i})$. 
    
    \section{Proof of Proposition \ref{prop:V_p}}
    By the chain rule, the time derivative of the first term on the right-hand side of \eqref{eq:V_p} is given by
    \begin{flalign}
    \frac{d}{dt}\sum_{a\in A} \phi(a)\prod_{i\in I}&\ppi^i(a^i) 
    =\sum_{i\in I} \phi(\br^i(\pq^i),\ppi^{-i}) - \phi(\ppi^i,\ppi^{-i})\nn\\
    &=\sum_{i\in I} u^i(\br^i(\pq^i),\ppi^{-i}) - u^i(\ppi^i,\ppi^{-i}),
    \end{flalign}
    where the last equality is due to \eqref{eq:potentialFunc}. Therefore, the time derivative of $V_p$ is given by
    \begin{flalign}
    \frac{dV_p}{dt} &= - \sum_{i\in I} \big(u^i(\br^i(\pq^i),\ppi^{-i}) - u^i(\ppi^i,\ppi^{-i}) \nn\\
    &+ \tau \nabla H(\ppi^i)\cdot (\br^i(\pq^i)-\ppi^i) + 2\|\pq^i-u^i(\cdot,\ppi^{-i})\|_2\big).
    \end{flalign}
    Then, adding the non-negative $\pq^i(\br^i(\pq^i))+ \tau H(\br^i(\pq^i)) - \pq^i(\ppi^i)-\tau H(\ppi^i) \geq 0$ to the right-hand side yields that
    \begin{flalign}
        \frac{dV_p}{dt} \leq& \sum_{i \in I} \tau \big(H(\br^i(\pq^i))-H(\ppi^i) - \nabla H(\ppi^i)(\br^i(\pq^i)-\ppi^i) \nn\\
        &+u^i(\ppi^i,\ppi^{-i}) - u^i(\br^i(\pq^i),\ppi^{-i}) +\pq^i(\br^i(\pq^i)) - \pq^i(\ppi^i) \nn\\
        &-2\|\pq^i-u^i(\cdot,\ppi^{-i})\|_2\big).
    \end{flalign}
       As in \eqref{eq:final}, by the concavity of the entropy function, we have
    \begin{flalign} \label{eq:VpEntropyCancel}
        \frac{dV_p}{dt} \leq& \sum_{i \in I} \big(u^i(\ppi^i,\ppi^{-i}) - u^i(\br^i(\pq^i),\ppi^{-i}) \nn \\
        &+\pq^i(\br^i(\pq^i)) - \pq^i(\ppi^i)  -2\|\pq^i-u^i(\cdot,\ppi^{-i})\|_2\big).
    \end{flalign}
    Viewing $q^i\in \mathbb{R}^{A^i}$ as a $|A^i|$-dimensional vector, we can write $q^i(\mu) = (\mu^T)q^i(\cdot)$ for any $\mu \in \Delta^i$. Therefore, we obtain
    \begin{flalign} \label{eq:VpFinal}
        \frac{dV_p}{dt} \leq& \sum_{i \in I} \big((\ppi^i - \br^i(\pq^i))^T u^i(\cdot,\ppi^{-i}) \nn \\
        &- (\ppi^i - \br^i(\pq^i))^T \pq^i(\cdot) -2\|\pq^i-u^i(\cdot,\ppi^{-i})\|_2\big) \nn\\
        =&\sum_{i \in I} \big((\br^i(\pq^i) - (\ppi^i))^T (\pq^i(\cdot) - u^i(\cdot,\ppi^{-i})) \nn \\
        &-2\|\pq^i-u^i(\cdot,\ppi^{-i})\|_2\big) \leq 0,
    \end{flalign}
    where the last inequality follows from a similar argument that is used in \eqref{eq:Lbound}, and we use the fact that $2\|x\|_2 \geq 2\|x\|_\infty \geq |(\ppi^i-\br^i(\pq^i))^T(x)|$ for all $x \in \mathbb{R}^{\left|A^i\right|}$, because $\ppi^i \in \Delta^i$, and $\br^i(\pq^i) \in \Delta^i$. Similar to \eqref{eq:Lbound}, the equality condition can only be satisfied if $\br^i(\pq^i) = \ppi^i$, and $\pq^i = u^i(\cdot,\ppi^{-i})$.
\bibliographystyle{IEEEtran}
\bibliography{mybib}

\end{document}